\documentclass[conference]{IEEEtran}
\IEEEoverridecommandlockouts
\usepackage{cite}
\usepackage{amsmath,amssymb,amsfonts}
\usepackage{algorithmic}
\usepackage{algorithm}
\usepackage{graphicx}
\usepackage{textcomp}
\usepackage{xcolor}

\usepackage{cleveref}

\usepackage{booktabs}
\def\BibTeX{{\rm B\kern-.05em{\sc i\kern-.025em b}\kern-.08em
    T\kern-.1667em\lower.7ex\hbox{E}\kern-.125emX}}
\usepackage{amsthm}

\newtheorem{theorem}{Theorem}
\IEEEoverridecommandlockouts
\begin{document}

\title{Employing the Structural Power to Achieve
Supply-Demand Balanced Payment Channel Networks}

\author{
\IEEEauthorblockN{
Shuyao Xiao\textsuperscript{1},
Shengling Wang\textsuperscript{1,*},
Hongwei Shi\textsuperscript{1},
Weicheng Wang\textsuperscript{2},
Anlin Chen\textsuperscript{1}
}
\IEEEauthorblockA{
\textsuperscript{1}College of Artificial Intelligence, Beijing Normal University, Beijing, China\\
\textsuperscript{2}Beijing Institute of Technology, Zhuhai, China\\
xiaoshuyao@mail.bnu.edu.cn, wangshengling@bnu.edu.cn, hongweishi@mail.bnu.edu.cn
}
}

\maketitle

\begingroup
\renewcommand{\thefootnote}{*}
\footnotetext{Corresponding author.}
\endgroup

\begin{abstract}
Blockchain technology faces scalability challenges because transactions must be validated and recorded across the network. Payment channel networks (PCNs) improve efficiency by moving transactions off-chain and recording only critical interactions on the mainnet. However, PCNs require pre-deposited channel balances (supply) to match transaction demands (demand), and insufficient balances cause supply shortages. Existing approaches, including transaction path optimization and channel balance allocation, address this problem but incur high costs due to dynamic adaptation to fluctuating demands.
We reveal a correlation between balance deficits in PCNs and network topology by uncovering the structural organization of liquidity allocation. This enables supply–demand conflicts to be mitigated from a static topological perspective through channel reconfiguration, without additional balance replenishment. We further introduce payment topological entropy (PTE), an information-theoretic metric that quantifies each node’s deviation from the global average connection pattern and captures structural properties of balance supply. Based on PTE, we design MaxPTE, a topology optimization algorithm that reorganizes balance allocation across channels through structural reconfiguration, aligning static balance distribution with dynamic transaction demand.
Extensive experiments show that MaxPTE reduces balance deficits by 27.25\%, increases average maximum flow by 12.23\%, and decreases transaction failure probability by 25.96\%, outperforming existing benchmarks. The method also remains robust across diverse balance-demand distributions, improving supply–demand balance without prior demand prediction.
\end{abstract}

\begin{IEEEkeywords}
Blockchain, Payment channel network, Balance deficit, Topology structure
\end{IEEEkeywords}

\section{Introduction}
Blockchain technology ensures decentralization and transaction security. However, it faces significant scalability challenges, because every transaction must be validated by all nodes and recorded on the blockchain ledger, leading to low throughput and high confirmation latency as the number of transactions increases. To address this, payment channel networks (PCNs) have emerged as a solution by enabling off-chain transactions that reduce reliance on the blockchain mainnet, thereby improving transaction efficiency. Specifically, PCNs allow any two blockchain nodes to pre-establish a dedicated off-chain payment channel using smart contracts ~\cite{poon2016bitcoin}. In contrast to executing each transfer directly on the blockchain mainnet, this channel accommodates high-frequency transactions and submits only critical information to the mainnet upon channel creation, final settlement, or disputes. Consequently, this mechanism significantly reduces on-chain submissions and alleviates the load on the blockchain mainnet.

Although payment channels support bidirectional transactions, the
available balance for each direction is determined by the current
channel state. Since our objective is to characterize how pre-deposited
liquidity is allocated among outgoing payment directions at a given
state, we model the PCN as a directed weighted graph, where each directed
edge represents the transferable balance from one node to another at that
state.

To operate effectively, PCNs require each off-chain channel to maintain a specific pre-deposited balance. This balance ensures collateralization and settlement for off-chain transfers. It offers sufficient balance for direct payments between nodes, prevents overspending, and ensures the necessary balance for relay transactions using the channel. The balance in each channel is independently locked, cannot flow across channels, and cannot be dynamically reallocated by nodes to replenish other channels; it can only be configured ex ante according to the smart contract~\cite{poon2016bitcoin}. This principle of independent locking ensures a one-to-one correspondence between balance and channel state, securing the safety, verifiability, and tamper-resistance of off-chain payments at the mechanism level. Once the payment channel is established, the balance remains fixed and cannot be adjusted or redistributed.

From a supply–demand perspective, this smart contract-based static locking mechanism defines the pre-deposited balance in channels as supply, while direct and relay payment requests of this channel constitute demand. Since balance cannot move across channels or be expanded ex post~\cite{khalil2017revive}, the total supply in a channel is determined solely by the pre-deposited amount set at channel establishment. When transaction demand exceeds the locked balance in a channel, transactions may fail even if other channels have enough balance, as cross-channel reallocation is infeasible. The structural supply–demand mismatch caused by static balance locking is a fundamental cause of insufficient pre-deposited balance problems in PCNs.

Existing solutions to the issue of insufficient supply in PCNs can be generally classified into two categories: \emph{the transaction path optimization} \cite{sivaraman2020high,lin2020fstr,bagaria2020boomerang,rahimpour2021spear,zhang2021robustpay+,liu2024balanced} and 
\emph{the channel balance planning} \cite{khalil2017revive,avarikioti2022hide,hong2022cycle,ge2022shaduf,li2020secure,ni2023utility,wang2023fence}. 
The former strategically selects the optimal transaction path from the available channels to reduce balance shortage, while the latter reasonably allocates the pre-deposited balance on the channel to achieve the supply-demand equilibrium. 
However, these methods mainly address the supply-demand imbalance through demand adaptation or balance redistribution, which may incur additional adjustment overhead under dynamic environments. Although intuitive,  they are costly and temporary since supply optimization must dynamically respond to demand variations. It is vulnerable to the high-cost issue of on-demand adaptation in complex environments.

De facto, the real solution often lies beyond superficial manifestations. This indicates that only by recapitulating the fundamental factors of the phenomenon can we embrace the dominant force that functions at a fundamental level. 
Aware of this, 
our first question is: \emph{ What is the dominant force behind the supply-demand dilemma that is not demand-awareness but can still systematically alleviate the balance deficit?}\looseness=-1

To answer the first question, we explore solutions from the topological structure of PCNs, which can be adjusted through channel reconfiguration mechanisms. We conducted empirical experiments in the Lightning Network Daemon \cite{LightningNetworkDaemon} to explore the relationship between the topologies of PCNs and balance deficits. By balance deficits, we mean the difference between the demand and the deposited balance within a channel.
\begin{figure}[t]
    \centering
    \includegraphics[width=1\linewidth]{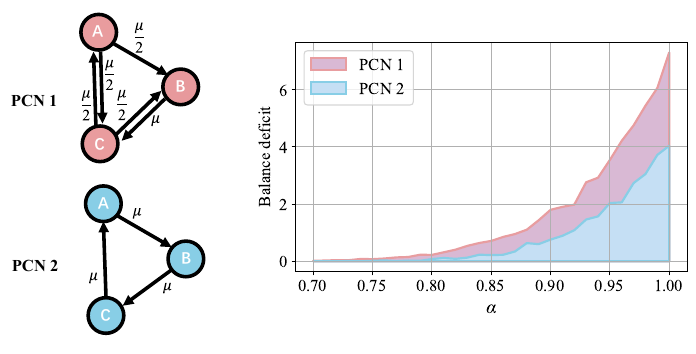}
    \caption{Two PCNs with 3 nodes and their balance deficits. Assume that each node possesses the equal balance capital $\mu$. These two networks show different topologies formed by the allocation of their pre-deposited balance at this moment, as shown on the left. Let the amount of transaction balance demand be proportional to the balance with parameter $\alpha\in[0.7,1]$, and it follows  Poisson distribution \cite{ross2014introduction} with mean $\lambda=\alpha\times\mu$. The figure on the right reflects the comparison of balance deficits between PCN 1 and PCN 2.}
    \label{2pcnCompare}
    \vspace{-0.2cm}
\end{figure}

\begin{figure}[t]
    \centering
    \includegraphics[width=0.95\linewidth]{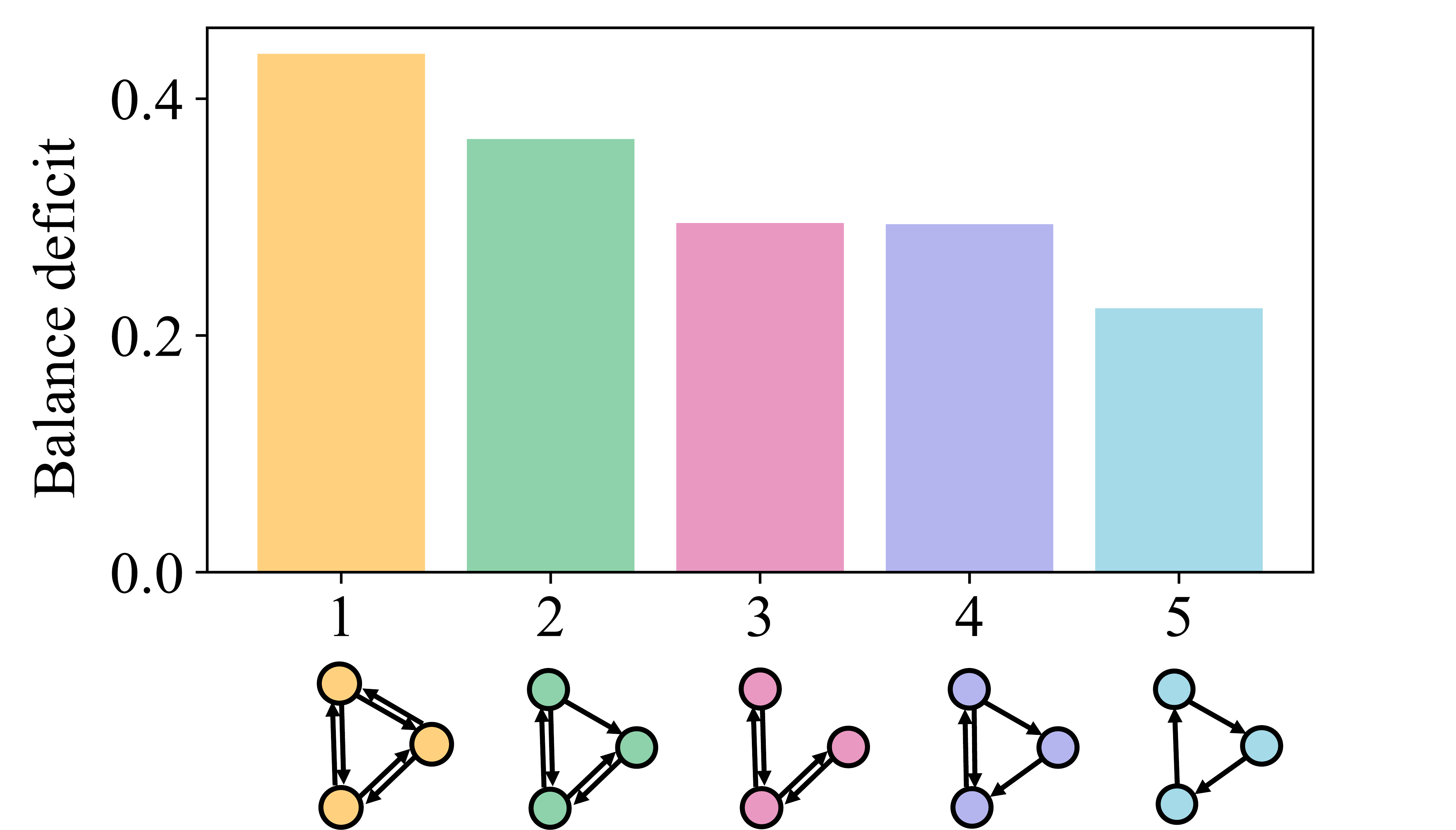}
    \caption{Balance deficits among different topologies.}
    \label{different_topologies}
    \vspace{-0.2cm}
\end{figure}

We tested the expected balance deficits generated in each channel within two different PCN topological structures, as shown in the left panel of \cref{2pcnCompare}. 
The experiment \footnote{The blockchain protocol of PCNs (e.g., Bitcoin's script) requires both parties to jointly authorize channel state updates. Since the transferable balance in each direction is determined by the current channel state, we represent the instantaneous liquidity state as directional balance flows and use directed graphs to capture the allocation of transferable balances in PCNs.} was conducted 100 times, and the results were averaged. The right figure in \cref{2pcnCompare} shows that, with the same demand, the total balance deficit of PCN 1 significantly exceeds that of PCN 2. 

To further verify the impact of network topology on balance deficits in PCNs, we extended the experimental setup to test additional connection patterns in three-node PCNs. We strictly ensured that each node possessed at least one outgoing channel and one incoming channel so that all nodes could effectively participate in the circulation of balances within the network. As shown in \cref{different_topologies}, PCNs with different topological structures exhibit significant differences in balance deficits. These preliminary findings suggest that PCN topology is an important factor affecting balance shortages in the network~\cite{seres2020topological}. Building upon this key observation, we raise the second core research question: \emph{What topological characteristics of a PCN can effectively reduce balance deficits?}

To address the above-mentioned question, we adopt an information-theoretic perspective and introduce the concept of \emph{payment topological entropy} ($PTE$), which quantifies the deviation of each node from the global average connection pattern. It characterizes the implicit regularities of the PCN’s balance supply system, which transforms the topological features of a PCN into a quantifiable information entropy parameter. This not only explains the underlying causes of the differences in balance deficits observed between various topologies, such as PCN 1 and PCN 2, but also verifies through further experiments that PCNs with lower balance deficits indeed exhibit higher $PTE$.
Building on this foundation, the third question we aim to answer is: \emph{Through what mechanism can we construct a PCN topology that minimizes the risk of balance deficits based on this metric?}

To tackle the third question, we leverage the characteristic that a PCN topology can be adjusted at low cost via modifying smart contracts. We propose a PTE-guided topology optimization algorithm that iteratively
searches for channel consolidations improving the structural organization
of liquidity. The algorithm iteratively merges redundant channels between nodes, efficiently aggregating balances that were previously dispersed across multiple channels. While optimizing node connection patterns to increase $PTE$, it also maintains strong network connectivity through predefined constraints. Our algorithm effectively reduces the risk of channel balance deficits due to supply-demand imbalances while maintaining transaction efficiency, achieving coordinated optimization of the PCN topological order and transaction performance.

Our work moves beyond traditional supply-demand management by
uncovering the structural factors that influence balance deficits in
PCNs. We propose a demand-robust solution that mitigates the complexity of performance optimization caused by the unpredictability of transient demands and the uncertainty of unknown demands. Our contributions are both pragmatic and theoretical, and can be summarized as follows:
\begin{enumerate}
    \item We reveal the relationship between balance deficits and PCN topologies from the perspective of structural liquidity organization.
    Therefore, this insight provides a new perspective for addressing dynamic supply-demand imbalance. Such an insight enables a cost-efficient approach, whereby balance deficits can be mitigated without increasing the total balance through topology-level channel reconfiguration.
    \item We introduce $PTE$, a metric that characterizes the structural properties of balance supply in PCNs. Building upon this metric, we further design a PCN topology optimization algorithm. Under the balance-locking rule, this mechanism efficiently aligns the static balance distribution with the dynamic transaction demand, effectively reducing balance deficits from a perspective of PCNs' topological structure.
    \item Extensive simulations and performance evaluations validate the effectiveness of the proposed approach. Extensive experiments show that our MaxPTE algorithm significantly reduces balance deficits by 27.25\%, increases average maximum flow by 12.23\%, and decreases transaction failure probability by 25.96\%, which surpasses other benchmarks. Moreover, our method demonstrates considerable robustness across various balance-demand distributions, demonstrating its effectiveness in improving the supply–demand balance.
\end{enumerate}

\section{Payment Topological Entropy}
\label{sec: Payment Topological Entropy}

\subsection{Definition}

To quantify the information characteristics of PCN topology, inspired by complex network analysis~\cite{newman2003structure}, we introduce payment topological entropy (PTE) as a topology-level indicator of structural liquidity organization,  which measures the deviation of each node's connection pattern from the global average and thereby captures the structural regularities of the balance supply system in PCNs.

\textbf{Definition 1 (Payment topological entropy, $PTE$).}
The $PTE$ of a PCN topology is defined as the average structural deviation of all nodes in the network:

\begin{equation}
PTE=
\frac{1}{n}
\sum_{i=1}^{n}
TVD(W_i^{out},W^{in}),
\label{PTE}
\end{equation}

where $n$ denotes the total number of nodes in the PCN, and
$TVD(W_i^{out},W^{in})$ represents the deviation of node $i$'s outgoing balance allocation pattern from the network-level incoming balance distribution, measured by the Total Variation Distance (TVD)~\cite{cover2012elements}.

Specifically,

\begin{equation}
W_i^{out}=(w_{i1},w_{i2},\ldots,w_{in})
\end{equation}

denotes the normalized outgoing balance distribution of node $i$, where

\begin{equation}
w_{ij}
=
\frac{c_{ij}}
{\sum_{k\in i^{out}}c_{ik}},
\end{equation}

where $c_{ij}$ denotes the balance allocated from node $i$ to the channel established with node $j$, and $\sum_{k\in i^{out}}c_{ik}$ represents the total balance reserved by node $i$. Therefore, $w_{ij}$ indicates the proportion of node $i$'s total reserved balance allocated to node $j$.

In contrast, $W^{in}$ describes the network-level average incoming balance distribution, which is calculated by aggregating the normalized outgoing balance distributions of all nodes:

\begin{equation}
W^{in}
=
(w^{in}_{1},w^{in}_{2},\ldots,w^{in}_{n}),
\end{equation}

where each component is defined as

\begin{equation}
w^{in}_{j}
=
\frac{1}{n}
\sum_{i=1}^{n}w_{ij}.
\end{equation}

Here, $w^{in}_{j}$ represents the average proportion of normalized outgoing balance received by node $j$ from all nodes in the network.
Accordingly, the Total Variation Distance between $W_i^{out}$ and $W^{in}$ is calculated as

\begin{equation}
TVD(W_i^{out},W^{in})
=
\frac{1}{2}
\sum_{j=1}^{n}
\left|
w_{ij}-w^{in}_{j}
\right|.
\label{TV}
\end{equation}

The value of $TVD(W_i^{out},W^{in})$ reflects the structural difference between node $i$'s local liquidity allocation and the global balance circulation pattern. A larger value indicates that node $i$ has a more distinctive balance allocation pattern compared to the network average, while a smaller value suggests that node $i$ follows a distribution closer to the overall network pattern.

\subsection{Structural Interpretation of $PTE$.}

In a PCN, available liquidity is determined not only by the total amount of pre-deposited balance, but also by how that balance is structurally distributed across payment channels. Due to the channel-level balance-locking mechanism, balances assigned to different outgoing channels form independent liquidity pools and cannot be freely reallocated after channel establishment. Consequently, two PCNs with the same total balance may exhibit different balance deficits because their liquidity is organized through different topological structures.

For node $i$, $W_i^{out}$ describes its local liquidity allocation pattern, representing the proportions of its reserved balance assigned to different outgoing channels. In contrast, $W^{in}$ describes the network-level balance reception pattern obtained by aggregating the normalized outgoing allocations of all nodes. Comparing these two distributions characterizes how the liquidity role of an individual node differs from the overall balance circulation pattern of the network.

When $W_i^{out}$ is close to $W^{in}$, node $i$ follows a liquidity allocation pattern similar to the network-wide structure. If most nodes exhibit similar allocation behaviors, the network tends to have homogeneous liquidity roles, with multiple nodes providing overlapping liquidity functions. Under the balance-locking mechanism, these repeated allocation patterns fragment available liquidity into multiple independently constrained pools, reducing the efficiency of balance circulation.

In contrast, a larger $TVD(W_i^{out},W^{in})$ indicates that node $i$ exhibits a more differentiated liquidity role. Its balance allocation pattern provides a distinct contribution to the overall circulation structure rather than replicating existing allocation patterns. At the network level, differentiated liquidity roles reduce redundant balance allocation and form more effective balance-flow corridors, enabling the existing liquidity supply to support larger transferable flows.

Therefore, $PTE$ measures the structural differentiation of liquidity roles among nodes. A low $PTE$ indicates similar balance allocation patterns and the network contains redundant liquidity structures, while a high $PTE$ suggests complementary liquidity roles and a more organized balance supply structure. The effectiveness of $PTE$ stems from the balance-locking mechanism of PCNs. When multiple nodes adopt similar liquidity allocation patterns, the available balance is repeatedly partitioned into independent pools, causing structural redundancy and limiting balance circulation. Therefore, a higher $PTE$ indicates not merely greater topological diversity, but a more complementary organization of liquidity roles that reduces redundant balance allocation.

\begin{figure}[t]
    \centering
    \includegraphics[width=1\linewidth]{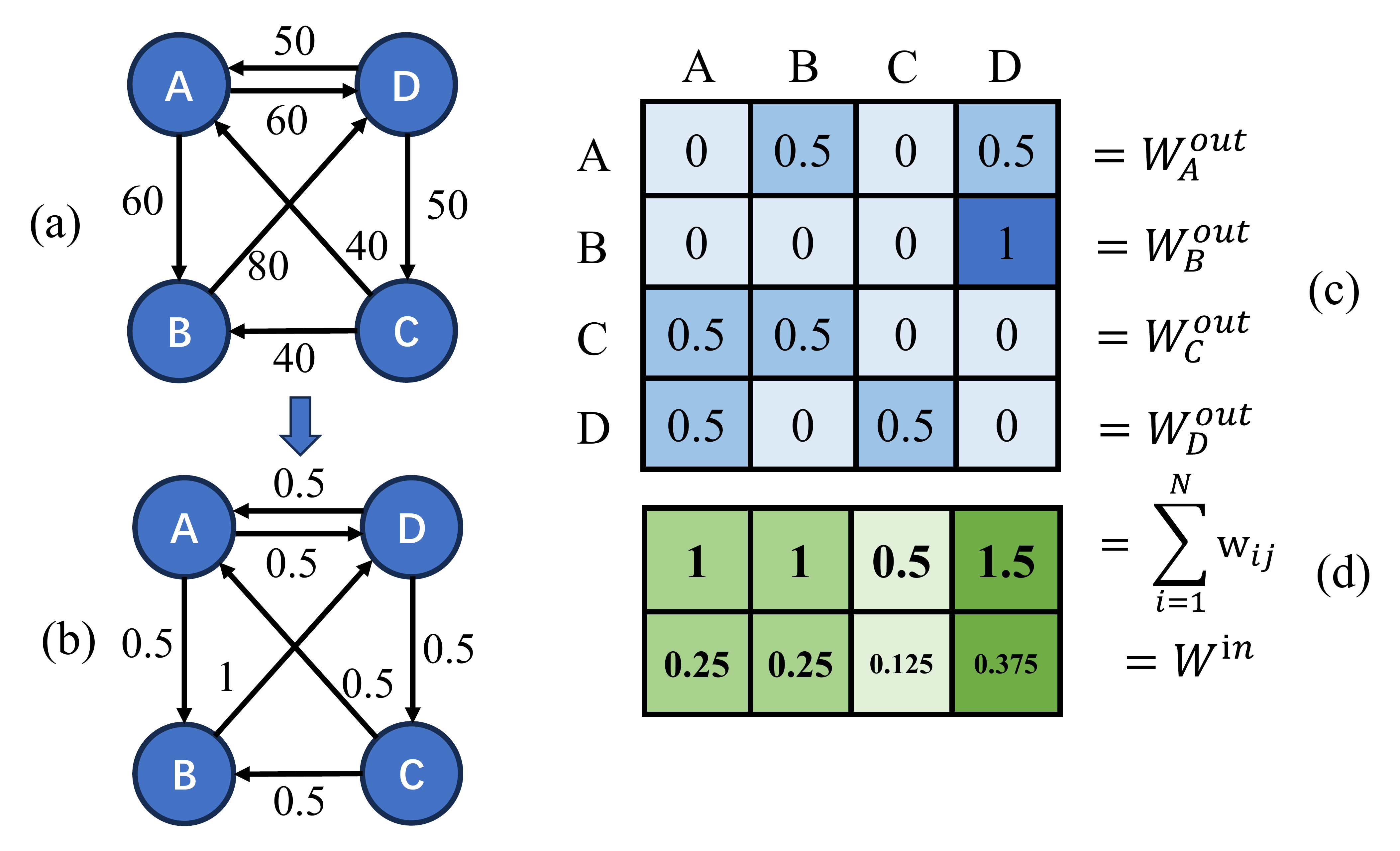}
    \caption{Example computation of $PTE$ on a toy PCN and its normalized matrices.}
    \label{pte example}
    \vspace{-0.2cm}
\end{figure}

\cref{pte example} illustrates an example of calculating the $PTE$ of a network. 
\cref{pte example}(a) shows balances reserved in all channels in the network. 
\cref{pte example}(b) represents the result of normalizing the pre-deposited funds of each node. 
For instance, node $A$ allocates 60 coins to the channel from $A$ to $B$ and another 60 coins to the channel from $A$ to $D$. 
After normalization, the weight of each outgoing edge from node $A$ becomes $60 / (60 + 60) = 0.5$.
\cref{pte example}(c) represents the normalized transition matrix of the PCN. 
Each row describes the proportion distribution of the outgoing balances of the corresponding node to each connected node, i.e., $W_i^{out}$.
\cref{pte example}(d) shows the column-wise summation of all entries in \cref{pte example}(c), corresponding to the ratio of total incoming balances received by each node. 
The first row lists $\sum_{i=1}^{n} w_{ij}$, and the second row shows the normalized results, i.e., $W^{in}$.

Substituting these values into \eqref{PTE} and \eqref{TV}, we obtain
\begin{align*}
PTE 
&= \frac{1}{n} \sum_{i=\left\{A,B,C,D\right\}} 
\mathrm{TVD}\!\left(W_i^{\mathrm{out}},\, W^{\mathrm{in}}\right)\\
&= \frac{1}{4}\sum_{i=\left\{A,B,C,D\right\}} \left[\frac{1}{2}\sum_{j=\left\{A,B,C,D\right\}}
\bigl|w_{ij}-W^{\mathrm{in}}\bigr|\right] \\
&= \frac{1}{4} \times \frac{1}{2} \Big[
\bigl|(0, 0.5, 0, 0.5) - (0.25, 0.25, 0.125, 0.375)\bigr| \\
&\quad\quad\quad\quad + \bigl|(0, 0, 0, 1) - (0.25, 0.25, 0.125, 0.375)\bigr| \\
&\quad\quad\quad\quad + \bigl|(0.5, 0.5, 0, 0) - (0.25, 0.25, 0.125, 0.375)\bigr| \\
&\quad\quad\quad\quad + \bigl|(0.5, 0, 0.5, 0) - (0.25, 0.25, 0.125, 0.375)\bigr|
\Big] \\
&= 0.53125.
\end{align*}

\begin{table}[t]
  \caption{$PTE$ and balance deficit of different PCNs.}
  \label{tab:pte-deficit}
  \centering
  \begin{tabular}{cccccc}
    \toprule
    & PCN 1 & PCN 2 & PCN 3 & PCN 4 & PCN 5 \\
    \midrule
    Topology &
      \includegraphics[height=8mm]{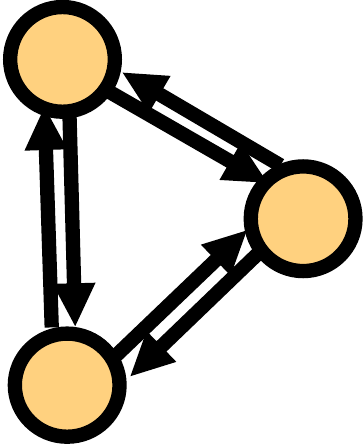} &
      \includegraphics[height=8mm]{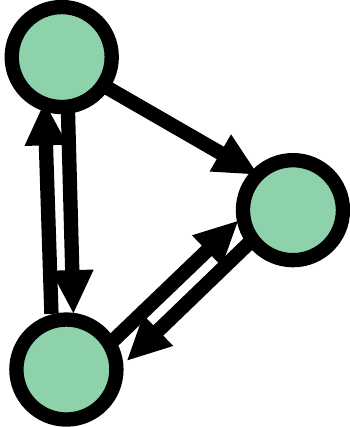} &
      \includegraphics[height=8mm]{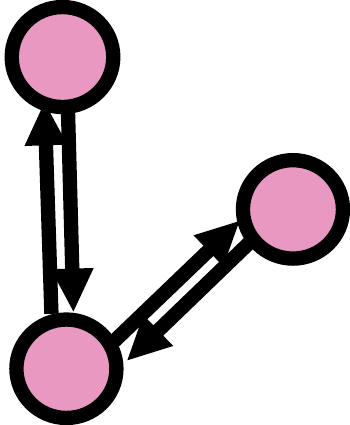} &
      \includegraphics[height=8mm]{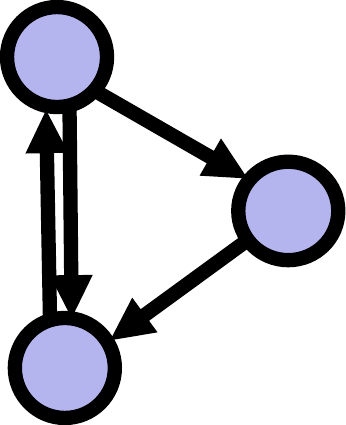} &
      \includegraphics[height=8mm]{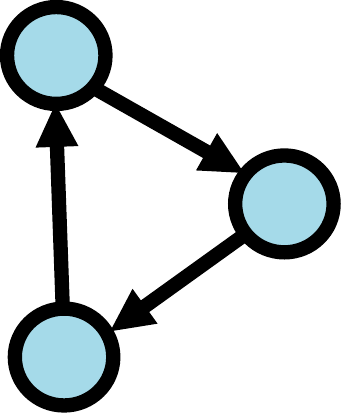} \\
    $PTE$ & 0.33 & 0.39 & 0.44 & 0.50 & 0.67 \\
    Balance deficit & 0.438 & 0.366 & 0.295 & 0.295 & 0.223 \\
    \bottomrule
  \end{tabular}
\end{table}
\subsection{Connection between $PTE$ and Topology Reconfiguration}
\label{subsec:pte_reconfiguration}

The preceding analysis explains how $PTE$ characterizes structural liquidity organization, while topology reconfiguration determines how this property can be improved. Consider a node whose reserved balance is distributed across multiple outgoing channels. When several channels provide similar liquidity functions, maintaining them separately disperses the available balance into multiple independent pools without proportionally improving balance circulation capability. Consolidating such channels preserves the total balance while reorganizing the liquidity allocation toward more distinctive payment directions.

Formally, suppose that the balances of two outgoing channels $(i,p)$ and $(i,q)$ are consolidated into channel $(i,p)$. The operation is defined as
\begin{equation}
c'_{ip}=c_{ip}+c_{iq}, \qquad c'_{iq}=0,
\label{eq:channel_merge}
\end{equation}
while the total reserved balance of node $i$ remains unchanged:
\begin{equation}
\sum_{j=1}^{n}c'_{ij}=\sum_{j=1}^{n}c_{ij}.
\label{eq:balance_conservation}
\end{equation}

This operation changes the normalized outgoing distribution $W_i^{out}$ while preserving the total liquidity supply. Among feasible channel pairs, consolidations that increase $PTE$ correspond to reorganizing redundant liquidity allocation into more differentiated structural roles. Therefore, $PTE$ provides a topology-level criterion for identifying which channel configurations should be consolidated without requiring prior knowledge of transaction demands.

However, topology optimization should balance structural differentiation and network accessibility. Excessive channel consolidation may reduce available payment paths and affect transaction efficiency. Therefore, MaxPTE optimizes $PTE$ under strong-connectivity constraints and bounded channel modifications, improving liquidity organization while preserving the fundamental reachability of the payment network.

\section{PCN Topology Optimization Algorithm}
\label{PCN Topology Optimization Algorithm}

Existing studies mainly mitigate balance deficits by adjusting routing decisions or reallocating pre-deposited balances according to observed or predicted transaction demands~\cite{khalil2017revive,avarikioti2022hide,hong2022cycle,ge2022shaduf,li2020secure,ni2023utility,wang2023fence}. In contrast, the role of channel establishment in shaping liquidity partitioning across the network has received considerably less attention. Under the channel-level balance-locking mechanism, different outgoing channels form independently constrained liquidity pools. Consequently, even when the total balance is sufficient, redundant or highly fragmented channel configurations may prevent the available liquidity from being efficiently utilized, thereby aggravating local balance shortages and disrupting network-wide balance circulation.

Motivated by this observation, we optimize the topology of the PCN rather than dynamically adapting channel balances to individual demand realizations. Specifically, we consolidate structurally redundant outgoing channels and reallocate their pre-deposited balances to the retained channels. This operation preserves the total balance owned by each node while changing how its liquidity is organized across the topology. The proposed $PTE$ metric is used to evaluate the resulting structural differentiation of node-level liquidity allocations and to select the channel consolidation that most improves the organization of the existing balance supply.

As illustrated by the toy PCNs in Table I, topologies with larger $PTE$
values tend to exhibit smaller balance deficits under identical balance
and demand settings. This observation indicates that topology
configurations with differentiated liquidity roles can provide more
efficient balance circulation. Therefore, $PTE$ serves as a structural
criterion for guiding topology reconfiguration toward more effective
liquidity organizations. However, a larger $PTE$ does not imply that arbitrary channel removal is beneficial. Although consolidating channels can reduce liquidity fragmentation and concentrate balances on more distinctive payment directions, removing too many channels may decrease path diversity, increase routing distance, or even damage network reachability.

To balance these competing effects, we constrain the topology optimization in two ways. First, every candidate topology must remain strongly connected, ensuring that each node can still transfer balance to every other node through at least one directed path. Second, we introduce an iteration parameter $k$ that limits the number of outgoing channels removed from each node during optimization. Therefore, $k$ controls the trade-off between increasing structural liquidity differentiation and preserving transaction-path availability. Based on these principles, we formulate the topology optimization objective as maximizing $PTE$ subject to bounded channel reduction and strong-connectivity constraints.

Formally, we propose a PCN topology optimization algorithm named MaxPTE,
which maximizes $PTE$ while preserving liquidity accessibility. Given an
initial PCN topology $G=(V,E)$, the optimization objective is formulated as

\begin{equation}
\begin{aligned}
\max_{G'=(V,E')}
\quad & PTE(G'),\\
\mathrm{s.t.}\quad
&0\leq d_i^{out}-d_i^{\prime out}\leq k,
\quad \forall i\in V,\\
& G' \text{ is strongly connected},\\
&\sum_j c'_{ij}=\sum_j c_{ij},
\quad \forall i\in V .
\end{aligned}
\label{eq:maxpte}
\end{equation}

where $d_i^{out}$ and $d_i^{\prime out}$ denote the outgoing degrees
before and after optimization, respectively. The second constraint
preserves network accessibility, while the third constraint guarantees
that topology reconfiguration only reorganizes existing liquidity
without introducing additional balance.

\begin{algorithm}
    \renewcommand{\algorithmicrequire}{\textbf{Input:}}
    \renewcommand{\algorithmicensure}{\textbf{Output:}}
    \caption{MaxPTE}
    \label{maxPTE}
    \begin{algorithmic}[1]
        \REQUIRE Initial PCN $G(V,E)$, total iteration round $k$
        \ENSURE Modified PCN $G'(V,E')$
        \STATE $G'(V,E') = G(V,E)$
        \FOR{$iter = 1$ to $k$}
            \FOR{each $v \in V$}
                \STATE $\textit{maxPTE} = \textit{PTE}(G')$, $\textit{tmpN} = \emptyset$, $\textit{tmpM} = \emptyset$
                \IF{$v$.number\_of\_channels $> 2$} 
                    % \COMMENT{The number of $v$'channels is not less than 2}
                    \FOR{$i ,j\in v.\text{neighbors}(i\neq j)$}
                        
                        \STATE $tmpG(V,tmpE) = G'(V,E')$
                        \STATE $tmpE$.balance($v, i$) += $tmpE$.balance($v, j$)
                        \STATE $tmpE$.remove$(v, j)$
                        \IF{$\textit{PTE}(tmpG) > \textit{maxPTE}$ and $tmpG$ is strongly connected}
                            \STATE $\textit{maxPTE} = \textit{PTE}(tmpG)$
                            \STATE $\textit{tmpN} = i$, $\textit{tmpM} = j$
                        \ENDIF
                        
                    \ENDFOR
                    \IF{$\textit{tmpN} \neq \emptyset$}
                        \STATE $E'$.balance($v$,\textit{tmpN}) += $E'$.balance($v$,\textit{tmpM})
                        \STATE $E'$.remove($v$, \textit{tmpM})
                    \ENDIF
                \ENDIF
            \ENDFOR
        \ENDFOR
    \end{algorithmic}
\end{algorithm}

The pseudocode is shown in \cref{maxPTE}.
Denote the optimized graph by $G'$. 
The algorithm begins by initializing $G'$ as the original graph $G$ (Line 1). 
It proceeds $k$ rounds iteratively (Line 2). 
In each round, each node $v$ in $V$ are sequentially processed by attempting to reduce one channel from itself (Line 3). 
Specifically, the $PTE$ of the current graph $G'$ is recorded in the variable $\textit{maxPTE}$, which tracks the highest $PTE$ achieved during the process. To facilitate the optimization, two temporary variables, $\textit{tmpN}$ and $\textit{tmpM}$, are introduced to store the current best merging scheme (Line 4). If the number of channels associated with node $v$ exceeds 2 (Line 5), the algorithm initiates the merging process for the corresponding channels.
Merging the channels $(v,tmpN)$ and $(v,tmpM)$ represents the current best scheme, which is determined through the following steps:

\textbf{\textit{Step 1:}} Identifying the optimal channel merging scheme, which maximizes the $PTE$ value, for node $v$ (Line 6-14): The two neighbors $i$ and $j$, of node $v$ are first identified (Line 6), and a temporary graph $\textit{tmpG}(V, E)$ is created to simulate the merging process. In $\textit{tmpG}$, the balances of channel $(v, i)$ are added to the channel $(v, j)$ (Line 7-8), and the channel $(v, j)$ is subsequently removed (Line 9). If the $PTE$ of $\textit{tmpG}$ exceeds the current $\textit{maxPTE}$ and $\textit{tmpG}$ remains strongly connected (i.e., each node can transfer balance to any other node in the network) (Line 10), the value of $\textit{maxPTE}$ is updated to reflect the new $PTE$ of $\textit{tmpG}$ (Line 11). Furthermore, the temporary variables $\textit{tmpN}$ and $\textit{tmpM}$ are updated to store the corresponding merging scheme (Line 12).

\textbf{\textit{Step 2:}} Updating the network based on the optimal merging scheme (Lines 15–18): If $\textit{tmpN}$ is non-null, it indicates the discovery of a merging scheme yielding a higher $PTE$ than the initial one for graph $G'$ at node $v$ (Line 15). The PCN is updated by adding the balances of channel $(v, \textit{tmpM})$ to channel $(v, \textit{tmpN})$ (Line 16) and subsequently removing channel $(v, \textit{tmpM})$ (Line 17). This ensures that network $G'$ is incrementally refined and optimized.

\begin{figure}[t]
    \centering
    \includegraphics[width=0.7\linewidth]{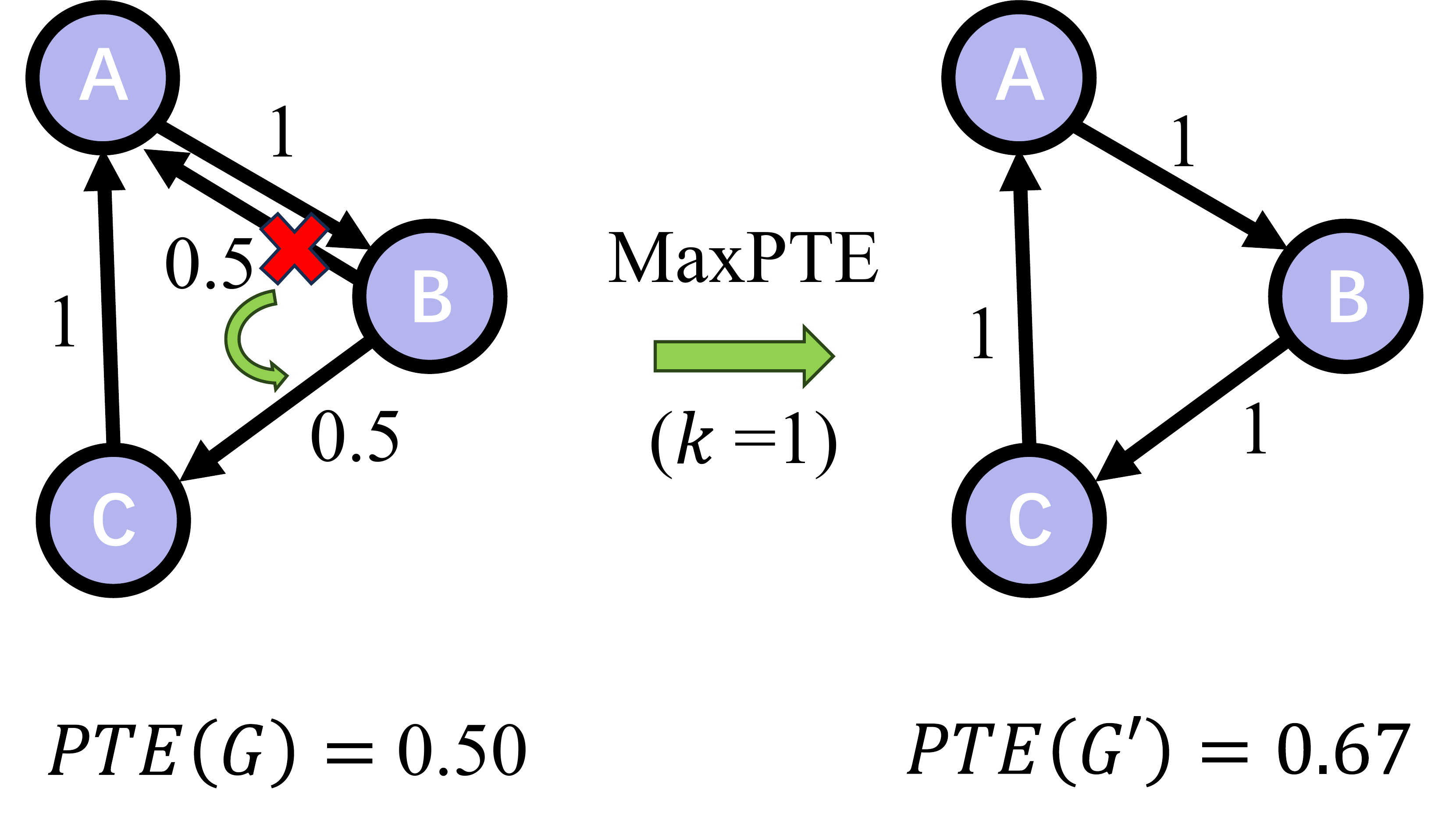}
    \caption{An example of the MaxPTE algorithm with one iteration (i.e., $k = 1$).  
    Before implementing the MaxPTE algorithm, $PTE(G) = $ 0.50. 
    By the MaxPTE algorithm, node B consolidates the channel towards node A into the channel towards node C, aggregating the corresponding balance while preserving the total liquidity supply.
    As a result, the graph $G'$ with the maximum $PTE$ is obtained, achieving $PTE(G') = $ 0.67.\looseness=-1}
    \label{maxPTEexample}
    \vspace{-0.2cm}
\end{figure}
 We use \cref{maxPTEexample} to provide an example of the MaxPTE  algorithm. It effectively ensures that each node undergoes at most $k$ channel merges, improving network structure with limited topology modifications. For changes in channel balance, when altering the topology, i.e., merging two channels, the balance is explicitly deployed into the merged channel.

\begin{theorem}
    The time complexity of MaxPTE is \(O(kn\Delta^2|E|)\), where
    $k$ is the maximum number of channel reductions for a single node,
    $\Delta$ denotes the maximum outgoing degree of nodes, and $|E|$
    represents the number of channels in the PCN.
\end{theorem}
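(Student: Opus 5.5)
The bound follows by counting the loop iterations of \cref{maxPTE} and charging each innermost iteration the cost of its most expensive operations: a $PTE$ evaluation and a strong-connectivity test.

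\textbf{Loop counts.} The outer loop (Line 2) runs $k$ times, and the loop over nodes (Line 3) runs $n$ times per round. For a fixed $v$, the candidate pairs $(i,j)$ with $i\neq j$ (Line 6) range over ordered pairs of out-neighbours of $v$, so there are at most $\Delta(\Delta-1)=O(\Delta^2)$ of them. The body of Lines 7--13 is therefore executed at most $O(kn\Delta^2)$ times in total. The bookkeeping outside the innermost loop (Line 4 and Lines 15--18) runs $O(kn)$ times. Its dominant cost is the single $PTE(G')$ evaluation on Line 4, which is no more than one inner iteration, so it is absorbed into the main term.

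\textbf{Cost of one inner iteration.} Copying $G'$ to $tmpG$, the balance update and the edge removal cost $O(n+|E|)$. In fact only $v$'s row changes, so one can avoid a full copy and instead modify $G'$ and undo the change afterwards. The strong-connectivity test on Line 10 can be done with Tarjan's algorithm, or by a forward and a reverse BFS from a fixed root, in $O(n+|E|)$. For $PTE(tmpG)$, I would compute it directly from Definition~1 using a sparse representation:
\begin{itemize}
\item each normalized row $W_i^{out}$ has only $d_i^{out}$ nonzero entries, so all $w_{ij}$ are obtained in $O(|E|)$;
\item $W^{in}$ is the column average of these entries, also $O(|E|)$;
\item each $TVD(W_i^{out},W^{in})$ splits into $\sum_{j\in i^{out}}|w_{ij}-w^{in}_j|+\sum_{j\notin i^{out}}w^{in}_j$;
\item the second sum equals $1-\sum_{j\in i^{out}}w^{in}_j$ because $W^{in}$ sums to one, so each node costs $O(d_i^{out})$ and all nodes together cost $O(|E|)$.
\end{itemize}
Hence one inner iteration costs $O(n+|E|)$.

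\textbf{Combining.} Strong connectivity of $G'$ is enforced throughout, so $|E|\ge n$ (assuming $n\ge 2$) and $O(n+|E|)=O(|E|)$. Multiplying gives $O(kn\Delta^2)\cdot O(|E|)=O(kn\Delta^2|E|)$, as claimed.

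\textbf{Main obstacle.} The step needing most care is the $PTE$ evaluation. A naive computation of \eqref{TV} runs over all $j$ for every $i$ and costs $\Theta(n^2)$, which would yield $O(kn\Delta^2(n^2+|E|))$ rather than the claimed bound. The sparse rewrite of the TVD using $\sum_j w^{in}_j=1$ is what removes the $n^2$ term. If one prefers to avoid that argument, an alternative is to update $PTE$ incrementally, since a merge changes only row $v$ and the two columns $i,j$ of $W^{in}$. However, $W^{in}$ enters every node's TVD term, so an incremental update still needs $O(|E|)$ work, which gives the same bound.
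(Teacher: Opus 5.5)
Your proposal is correct and follows the same counting as the paper's proof: $k$ rounds, $n$ nodes, $O(\Delta^2)$ candidate merges per node, and $O(|E|)$ per $PTE$ evaluation. It is also more complete, since the paper only asserts the $O(|E|)$ cost ``using the adjacency representation'' and never charges for the strong-connectivity test on Line 10, whereas you justify the former through the identity $\sum_{j\notin i^{out}} w^{in}_j = 1-\sum_{j\in i^{out}} w^{in}_j$ and bound the latter by $O(n+|E|)$, with a small refinement: $|E|\ge n$ is better justified not by strong connectivity (which presupposes the input $G$ is already strongly connected) but by the fact that $PTE$ is defined only when every node has positive outgoing balance, i.e.\ out-degree at least one.
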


\begin{proof}
The algorithm performs at most $k$ iterations, and each iteration
processes all $n$ nodes. For each node, MaxPTE enumerates pairs of
outgoing neighbors to identify the optimal channel consolidation scheme.
Since the number of neighbors is bounded by the maximum outgoing degree
$\Delta$, the number of candidate consolidations for each node is
$O(\Delta^2)$. For each candidate consolidation, the algorithm evaluates
the resulting $PTE$. 
Computing $PTE$ requires constructing the balance allocation
distributions and calculating the deviations between node-level and
network-level liquidity patterns, which takes $O(|E|)$ time using the
adjacency representation of the PCN.
Therefore, considering all nodes and iterations, the overall time complexity is \(O(kn\Delta^2|E|)\).
\end{proof}

Although MaxPTE has polynomial complexity, it is designed as a topology maintenance mechanism rather than an online routing algorithm, and is therefore executed periodically during channel establishment or topology adjustment instead of for individual transactions. Moreover, real-world PCNs usually exhibit sparse connectivity. The average node degree is approximately 7.43~\cite{lightningStats2023}; the LTC mainnet contains only 95 nodes, and the global Lightning Network contains 14,872 nodes~\cite{lightningExplorer2025}. Therefore, MaxPTE is mainly affected by local neighborhood structures rather than the global network scale, making it applicable to practical PCN environments.
\section{Experimental Evaluation}
\label{sec:Experimental Evaluation}

In this section, we adopt three representative metrics, consisting of balance deficit, average maximum flow, and transaction failure probability, which have been widely used to assess liquidity efficiency and network performance in prior PCN studies \cite{rohrer2019discharged,khalil2017revive,sivaraman2020high,liu2024balanced}. These studies focus on balance management or routing rather than topology modification, and are cited only for their metric formulations. We conduct comprehensive evaluations of performance in PCN on a 64-bit Ubuntu 16.04 operating system utilizing the Lightning Network Daemon \cite{LightningNetworkDaemon} and Python's NetworkX \cite{hagberg2008exploring}. 

\subsection{Necessity Analysis}

We first demonstrate the necessity of consolidating fragmented channels to improve liquidity utilization. Consider a PCN containing multiple independent channels, where each channel reserves a portion of the total available balance. Since channel balances are independently locked, these separated liquidity pools cannot be directly shared when transaction demands arrive. Therefore, even when the aggregate balance is sufficient, fragmented channel configurations may lead to local balance shortages.

As illustrated in \cref{independent-combined-fig}(a), the original PCN contains $n$ independent channels, and the balance deployed in each channel is denoted by $\mu$. When these fragmented channels are consolidated, their balances are aggregated into a combined channel with total capacity $n\times\mu$, as shown in \cref{independent-combined-fig}(b). The total liquidity supply remains unchanged, while the available balance becomes less fragmented and can serve a wider range of transaction demands.

\begin{figure}[ht]
    \centering
    \includegraphics[width=0.75\linewidth]{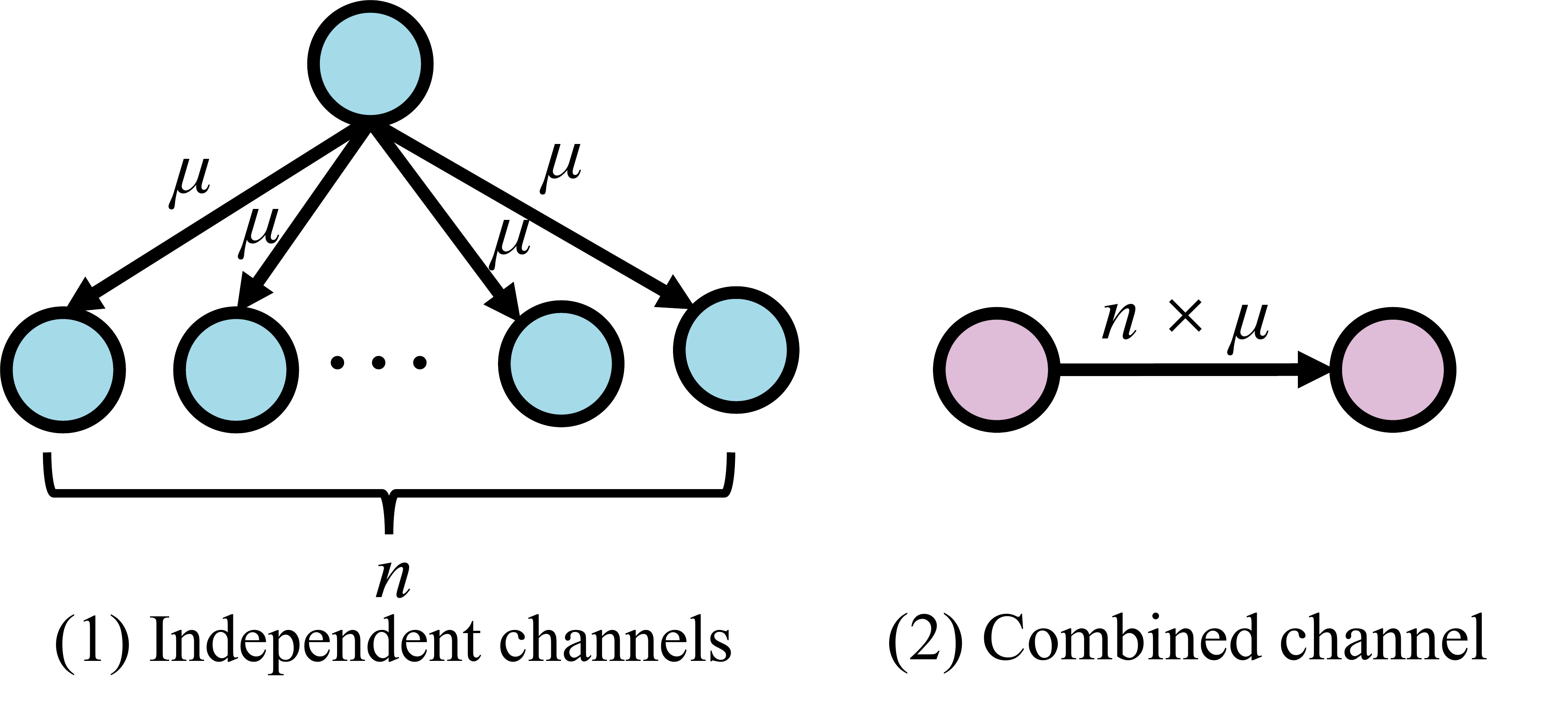}
    \caption{Schematic diagram of fragmented channels before consolidation and the aggregated channel after consolidation.}
    \label{independent-combined-fig}
\end{figure}

\begin{figure}[t]
    \centering
    \includegraphics[width=1\linewidth]{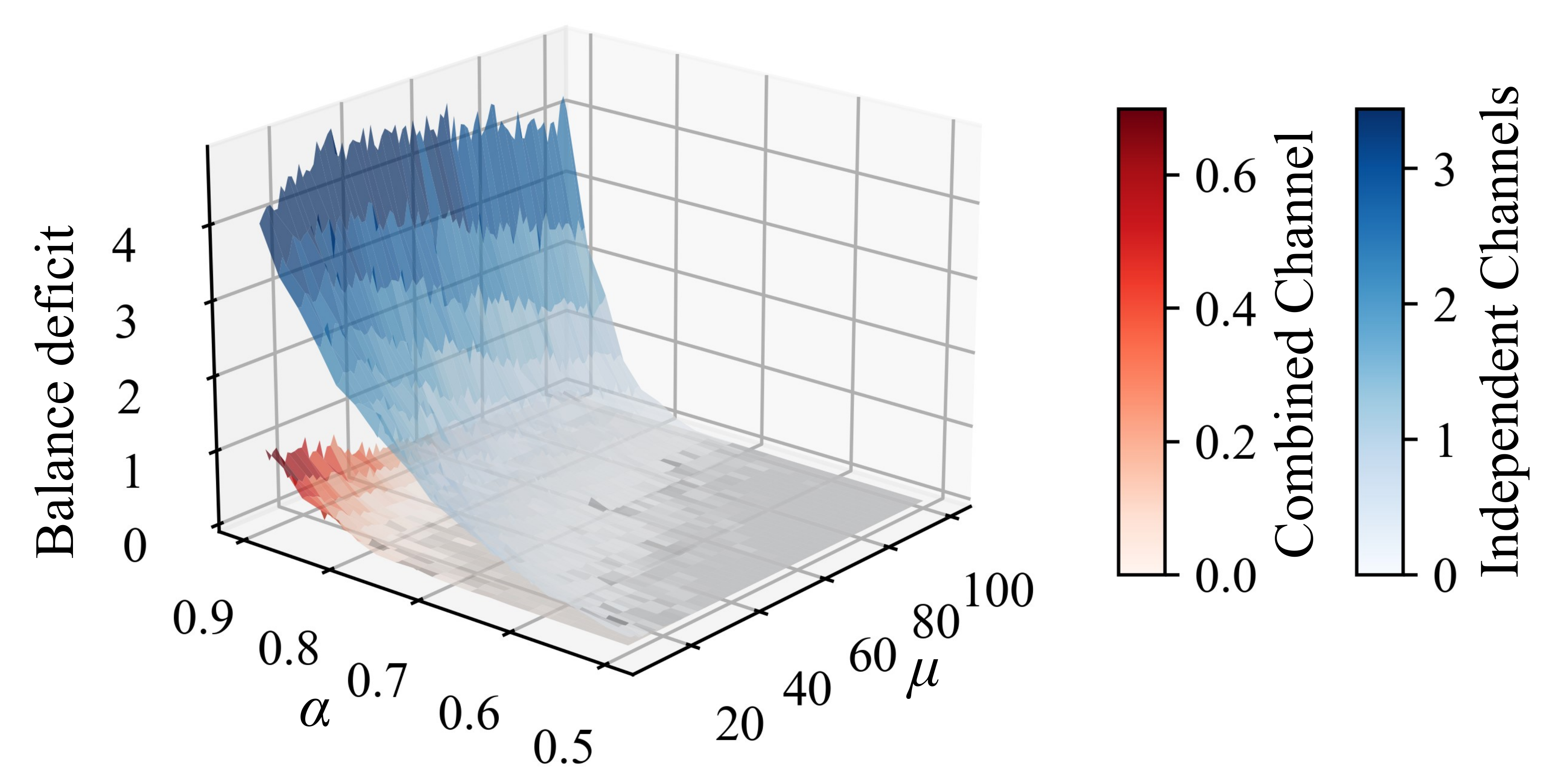}
    \caption{Balance deficit comparison.}
    \label{plane}
    \vspace{-0.2cm}
\end{figure}

To evaluate the occurrence of supply-demand imbalance risk within channels under varying degrees of supply-demand urgency, we introduce the load factor {$\alpha$} $\in[0.5, 1]$.  Under the conditions $\mu\in[0, 100]$ ( The definition of $\alpha$ and $\mu$ are illustrated in \cref{2pcnCompare}) and ${n = 10}$, we calculate the balance deficit generated by both the combined channel and the independent channels. To mitigate the impact of extreme probabilities, we conducted 1,000 independent experiments and used the average values as the balance deficit for each channel type.
As shown in \cref{plane}, the blue plane represents the balance deficit of the independent channels, while the red plane represents that of the combined channel. As $\alpha$ increases, the balance deficit in the channels grows. Additionally, regardless of changes in $\alpha$ and $\mu$, the balance deficit from the combined channel is consistently lower than that of the independent channels. Consolidating fragmented channels improves the efficiency of existing balance supply and increases the likelihood of meeting transaction demands.

\subsection{Balance deficit}
\label{subsec:balance_deficit}

This subsection evaluates how different channel allocation strategies respond to heterogeneous transaction demands through four representative distributions. Using 1ML’s channels-per-node percentiles~\cite{lightningStats2023} (e.g., 25\% of nodes have $\ge\!4$ channels, 5\% $\ge\!25$, 1\% $\ge\!96$), we generate 10 Lightning Network topologies (200 nodes and about 750 edges each). Each node allocates a total balance of 100 satoshis to its payment channels. We then apply our proposed MaxPTE-based optimization and compare the results with networks constructed by several baseline channel-consolidation algorithms. Specifically, we consider the following methods:\\
  \textbf{(1) Random:} Each node randomly merges two channels.\\
  \textbf{(2) MaxOut:} Each node merges channels with its two neighboring nodes having the highest degree.\\
  \textbf{(3) MinOut:} Each node merges channels with its two neighboring nodes having the lowest degree.\\
 \textbf{(4) MaxBetweenness:} Each node selects its two channels with the highest betweenness~\cite{freeman1977set} for merging. The betweenness of an edge is the ratio of the number of shortest paths that pass through it to the total number of shortest paths in the network.\\
 \textbf{(5) MaxClustering:} Each node selects two of its channels for merging to maximize the network’s clustering coefficient~\cite{watts1998collective}. The clustering coefficient of a node is defined as the ratio of the number of existing edges among its neighbors to the maximum possible number of such edges, and the network clustering coefficient is the average across all nodes.
 
For fairness, all networks preserve the same total number of channels as the MaxPTE-optimized topology, and we fix $k\!=\!1$ for MaxPTE in this evaluation.

The metrics we used are as follows. According to the definition of maximum flow in a network~\cite{Ford_Fulkerson_1956}, for any two nodes $i$ and $j$ ($i\!\neq\!j$), the maximum transferable balance through all paths is defined as
\(
F_{ij} = \max \left( \sum_{p \in P_{ij}} f_p \right),
\)
where $P_{ij}$ denotes the set of all valid paths between nodes $i$ and $j$, and $f_p$ is the transferable balance along path $p$. The total maximum flow of a PCN can be written as
\(
MF = \sum_{i \neq j} F_{ij}.\label{MF}
\)
The balance deficit between the node pair is defined as
\(
\text{Deficit}_{ij} = \max\{\,0,\; D'_{ij} - F_{ij}\,\},
\)
where $D'_{ij}$ represents the balance demand from node $i$ to node $j$. The overall balance deficit is obtained by averaging $\text{Deficit}_{ij}$ across all node pairs.

To assess the robustness of our approach under different transaction patterns, we evaluate balance deficits under four representative demand distributions~\cite{ross2014introduction}. The variable $Demand$ serves as a global scaling factor to uniformly amplify the generated demand matrix and simulate different traffic intensities.

\textbf{(a) Poisson:} Transaction demands between node pairs are independently generated according to a Poisson distribution $M_{ij}\sim\mathrm{Poisson}(\lambda)$ with $\lambda=20$, reflecting spontaneous and uncorrelated payment requests in steady-state PCNs.

\textbf{(b) Uniform:} Transaction demands are uniformly distributed across node pairs as $M_{ij}\sim U(0,2\lambda)$ with $\lambda=20$, simulating balanced liquidity utilization across the topology.

\textbf{(c) Power-law:} To model hub-dominated transaction patterns, node weights are generated according to their connectivity ranks as $w_i\propto1/\mathrm{rank}(i)^s$ with $s=0.5$. The final weights combine power-law and uniform components as $w=\alpha w_{\mathrm{power-law}}+(1-\alpha)w_{\mathrm{uniform}}$ with $\alpha=0.5$, and the demand matrix is constructed as $M=2\lambda ww^{\top}$.

\textbf{(d) Gaussian:} To capture locality-aware transaction patterns, demands are concentrated around three highly connected nodes ($|C|=3$), and the pairwise demand is generated according to $M_{ij}=\sum_{c\in C}\exp(-(d(i,c)^2+d(j,c)^2)/(2\sigma^2))$, where $\sigma=2.0$. This captures transactions concentrated around structurally central nodes such as merchants or liquidity hubs.

\begin{figure}[t]
    \centering
    \includegraphics[width=1\linewidth]{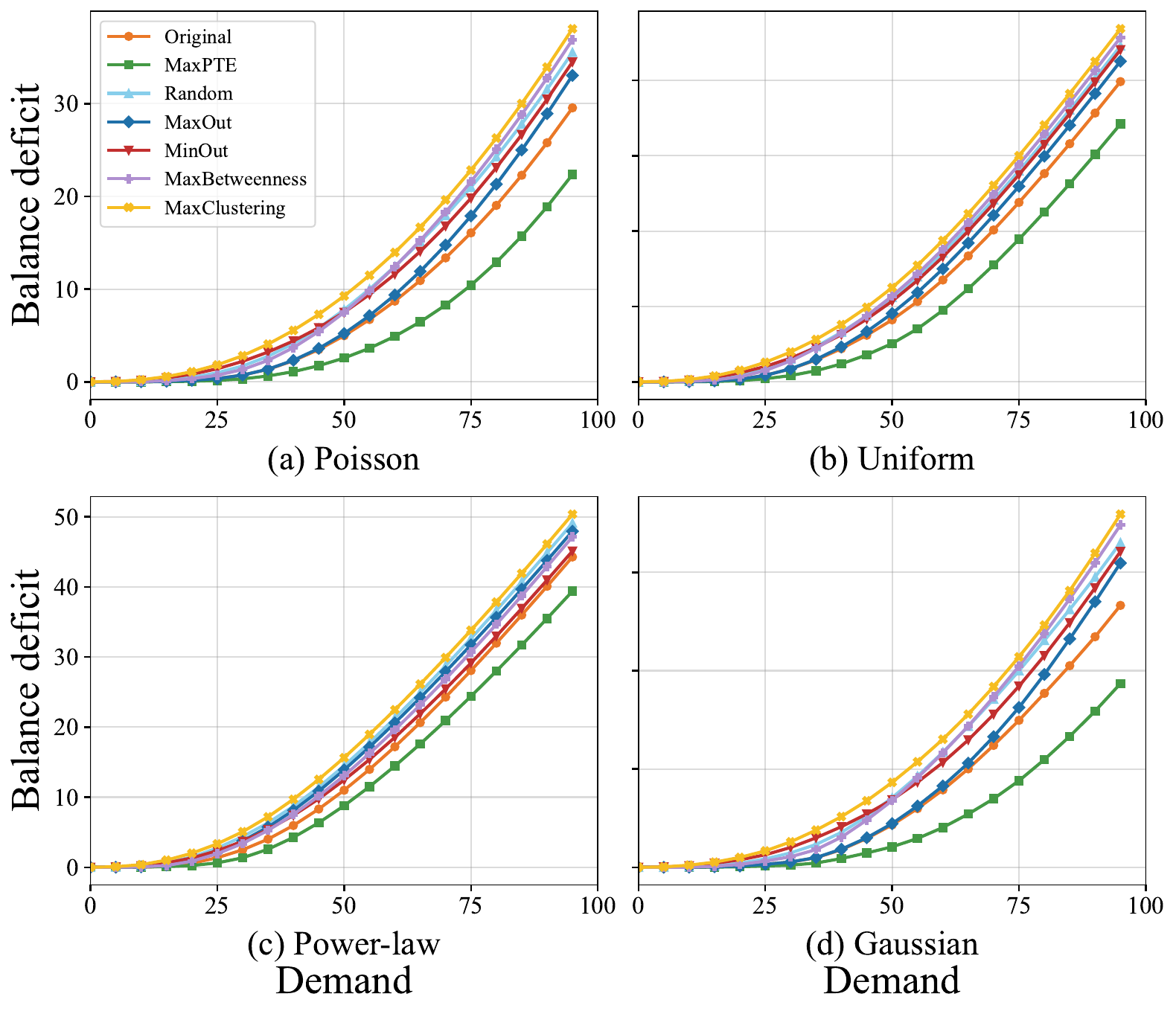}
    \caption{Average balance deficits under different demand distributions.}
    \label{4demands}
    \vspace{-0.2cm}
\end{figure}

As we can see, \cref{4demands} shows the average balance deficits under different demand distributions. As transaction demand increases, balance deficits grow accordingly. Notably, our MaxPTE-optimized PCN consistently yields lower balance deficits than both the original and other baseline networks. Across all demand models, our MaxPTE algorithm reduces the average balance deficits by 27.25\% compared to the original network and is considerably lower than that of other algorithms, indicating that it effectively reduces balance deficits and mitigates supply-demand imbalances in PCNs, outperforming other strategies. The improvement is especially significant under heterogeneous demand patterns (power-law and Gaussian distributions), where balance allocation and channel connectivity are highly uneven. This confirms that the MaxPTE algorithm enhances balance robustness by mitigating supply–demand mismatches and demonstrates robustness across different demand distributions.

\subsection{Average maximum flow}

Based on the above definition of the maximum flow ($MF$) between nodes, we define the average maximum flow $(AMF)$ of the network as the average of the maximum flows between all node pairs in PCN, which can be expressed as:
\[
AMF = \frac{1}{n(n-1)} \sum_{i \neq j} \max \left( \sum_{p \in P_{ij}} f_p \right),
\label{AMF}
\]
where $n$ is the total number of nodes in the network, and ${n(n-1)}$ represents the number of node pairs with direction. \looseness=-1

\cref{box} uses box plots to present the PCN average maximum flow values under different channel merging algorithms. The red vertical line in the center of each box plot represents the median of multiple experimental results. It can be observed that the MaxPTE algorithm improves the average maximum flow by 12.23\% compared to the original network. Moreover, MaxPTE algorithm significantly outperforms other channel merging strategies.  Furthermore, with the analysis of balance deficit above, \cref{box} also explains the reasons behind the superior performance of the MaxPTE algorithm --- the increase in maximum flow enhances the mobility and circulation of balance within the network, significantly contributing to mitigating supply-demand imbalances.

\begin{figure}[t]
    \centering
    \includegraphics[width=1\linewidth]{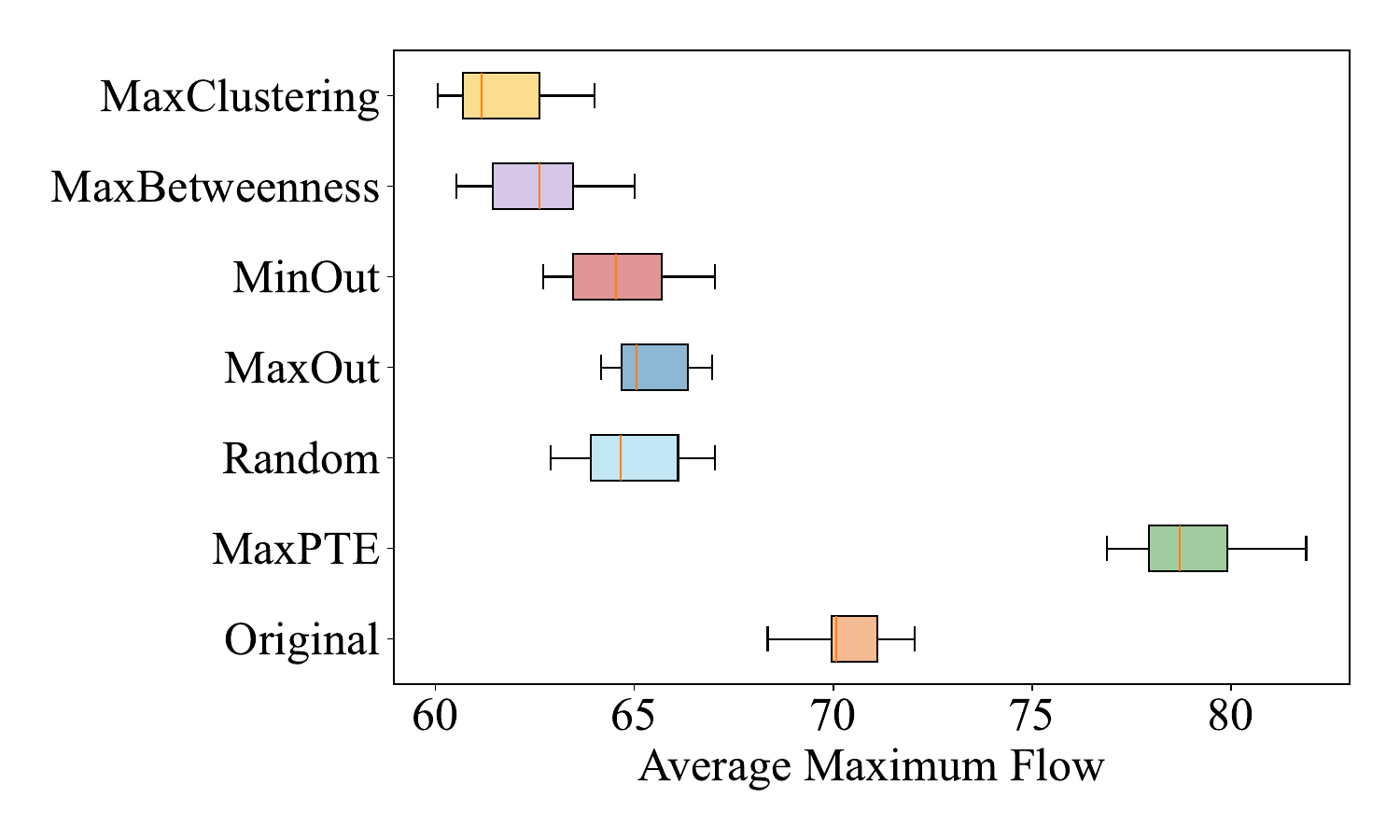}
    \caption{Average maximum flows of different PCNs.}
    \label{box}
    \vspace{-0.2cm}
\end{figure}

\subsection{Proportion of failure transactions}

\begin{figure}[t]
    \centering
    \includegraphics[width=1\linewidth]{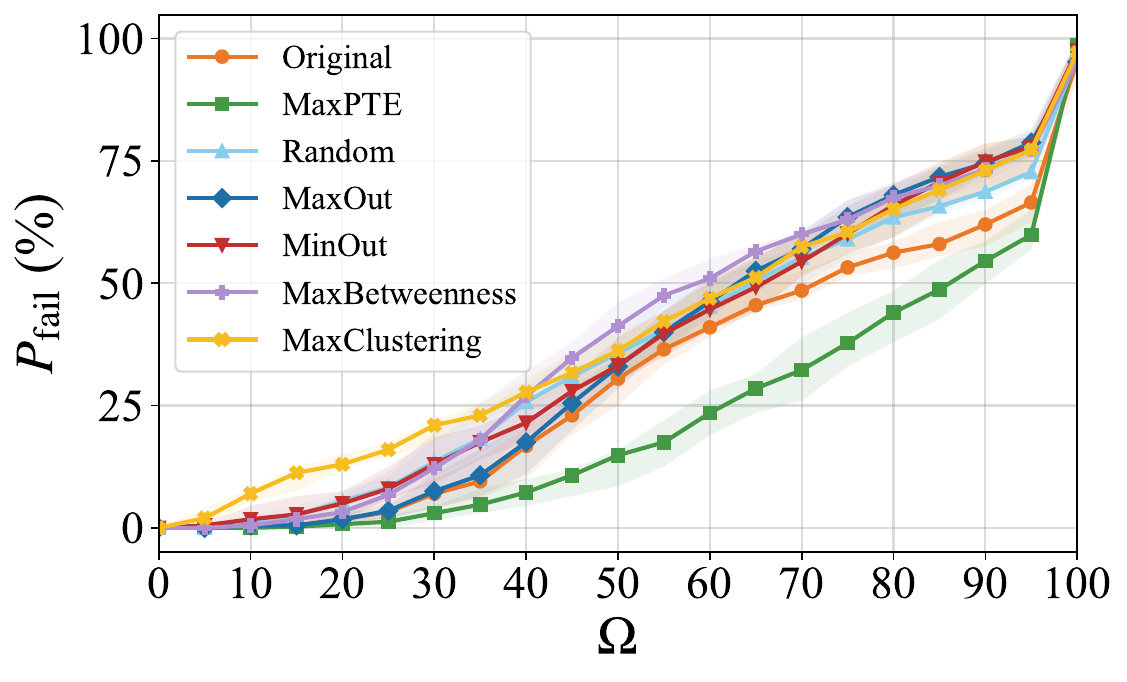}
    \caption{Proportion of failure transactions.}
    \label{badpairs}
    \vspace{-0.2cm}
\end{figure}

We define $P_{\mathrm{fail}}(\Omega)$ as the fraction of transaction requests that cannot be successful. 
We refer to $\Omega$ as the transaction balance demand, indicating the required amount for network routing feasibility.
Intuitively, higher $\Omega$ values correspond to larger balance requirements from the network.

Formally, let $\mathcal{T}$ be the set of all node pairs, and let $F_{\max}(s,d)$ denote the maximum available flow between source $s$ and destination $d$ under current channel balances. 
A transaction with balance demand $\Omega$ from $s$ to $d$ is deemed to \emph{fail} if $F_{\max}(s,d) < \Omega$. The failure probability metric follows prior approaches that quantify reliability and flow failures in PCNs \cite{rohrer2019discharged}.
Then we have
\begin{displaymath}
P_{\mathrm{fail}}(\Omega)
= \frac{1}{|\mathcal{T}|}
\sum_{(s,d)\in\mathcal{T}} \mathbf{1}\!\left[\,F_{\max}(s,d) < \Omega\,\right],
\end{displaymath}
which measures the share of transactions that cannot be fulfilled at the balance demand level $\Omega$. 
A higher $P_{\mathrm{fail}}(\Omega)$ indicates that fewer node pairs can forward transactions requiring at least $\Omega$ satoshis in total, implying a stronger supply–demand imbalance in the network.

As shown in \cref{badpairs}, the shaded bands represent the variation across 10 different network topologies. It is evident that MaxPTE algorithm achieves the lowest average failure probability, outperforming all baseline methods.
MaxPTE algorithm reduces the mean transaction failure rate by 25.96\% compared to the original topology. 
These quantitative results confirm that MaxPTE achieves the best performance among evaluated PCNs, significantly reducing transaction failures even under high balance demands, while alternative heuristics experience substantially higher $P_{fail}(\Omega)$ values on average.

\subsection{Discussion}
The evaluation results demonstrate that MaxPTE effectively improves
liquidity utilization and reduces transaction failures by reorganizing
the structural allocation of existing balances. Different from
demand-driven balance adaptation methods, MaxPTE provides a topology-level optimization perspective, enabling PCNs to enhance supply-demand
alignment through structural reconfiguration. This property makes
MaxPTE complementary to existing balance management and routing
optimization approaches. 

\section{Related Work}
\label{sec:Related Work}
To enhance the transaction success rate in PCN while maintaining supply-demand balance, related solutions can be generally categorized into transaction path optimization and channel balance management.

The methods of transaction routing optimization are illustrated by typical works: Spider \cite{sivaraman2020high}, which employs a multi-path partitioned transmission technique that divides transactions into smaller packets transmitted along separate paths. This method reduces strain on paths and enables better resource use and dynamic traffic management. FSTR \cite{lin2020fstr} introduces a path selection mechanism based on balance skewness, prioritizing routes with sufficient, evenly distributed balances to minimize path exhaustion risk and ensure a strategic selection process. Boomerang \cite{bagaria2020boomerang} enhances transaction success by incorporating redundant payment routes. During execution, payment packets are dispatched simultaneously across multiple paths. Once the recipient confirms receipt of the full amount, any unnecessary routes are immediately terminated to avoid excessive payments. Spear \cite{rahimpour2021spear} employs a similar method, ensuring at least one route successfully delivers the necessary balance. Upon payment confirmation, redundant paths are quickly interrupted, enhancing efficiency. Robust Pay+ \cite{zhang2021robustpay+} tackles the challenge of reliable routing by proposing a distributed algorithm that identifies the minimum-cost path in adverse conditions, maintaining stability and managing expenses effectively. BRBW \cite{liu2024balanced} combines an enhanced maximum flow algorithm with the Analytic Hierarchy Process (AHP) to locate routes with sufficient capacity while constructing a weight model that considers multiple factors, creating a sophisticated framework for optimizing transaction paths regarding both performance and reliability.\looseness=-1

Additionally, balance planning schemes for payment channels primarily include the following approaches: Revive \cite{khalil2017revive} enables users in PCN to securely rebalance channel balances according to channel owners' preferences without closing or topping up channels, ensuring flexible balance management. It also achieves global rebalancing through localized adjustments, reducing the risk of balance shortages across the network. HIDE \& SEEK \cite{avarikioti2022hide} proposes a participatory rebalancing protocol that utilizes multi-party computation to solve linear programming problems, preserving privacy while achieving optimal balance rebalancing and preventing disclosure of sensitive balance distributions among participants. Cycle \cite{hong2022cycle} maintains balance equilibrium within PCN through asynchronous rebalancing and circular payments, maximizing balance utilization across interconnected channels and enhancing rebalancing efficiency. Shaduf \cite{ge2022shaduf}, a novel off-chain rebalancing protocol, optimizes balance management in payment channel networks by transferring balances directly instead of relying on periodic configurations, thereby reducing dependence on liquidity cycles. PnP \cite{li2020secure} utilizes approximation algorithms and encrypted sorting to determine the deployment strategy for initial balance allocation in payment channels, ensuring that balance distribution meets transaction demands while optimizing balance utilization during the channel's initialization phase. UAR \cite{ni2023utility} optimizes payment channel liquidity by shifting tokens from high-balance to low-balance and high-utility channels, maximizing transaction success rates. The Fence \cite{wang2023fence}, utilizing an online competitive algorithm, dynamically allocates balance through real-time adjustments, adapting to network changes while minimizing inefficiencies and delays in the planning process.

However, the effectiveness of the aforementioned methods largely depends on demand prediction accuracy. Due to dynamic changes in transaction demands and channel conditions, these methods often require frequent updates to adapt to evolving balance allocation and network states. 
Our work differs from existing studies by optimizing the structural organization of liquidity through topology reconfiguration rather than adjusting routing decisions or balance allocations.
 
\section{Conclusion}
\label{sec:Conclusion}
Although PCNs provide an effective solution to blockchain scalability, balance shortages remain a critical challenge. This paper reveals the impact of topology structure on balance efficiency and proposes $PTE$ to characterize structural liquidity patterns. Based on $PTE$, we develop MaxPTE to optimize channel organization while preserving network accessibility. Extensive experiments under various balance-demand distributions show that MaxPTE reduces balance deficits by 27.25\%, improves average maximum flow by 12.23\%, and decreases transaction failure probability by 25.96\% compared with baseline methods, demonstrating its effectiveness without relying on prior demand prediction.

\bibliographystyle{IEEEtran}
\bibliography{reference}

\end{document}